\newtheorem{lem}{Lemma}
\newtheorem{cor}{Corollary}
\newtheorem{rem}{Remark}
\newtheorem{assumption}{Assumption}
\newtheorem{definition}{Definition}
\newtheorem{example}{Example}
\newtheorem{prop}{Proposition}
\begin{document}
%
%\title{Unifying SC LDPC Code Constructions and Optimizing for a Windowed Decoder}
\title{Algebraic Optimization of Binary Spatially Coupled Measurement Matrices for Interval Passing }

\author{\thanks{This work is supported in part by NSF grants ECCS-1407910
    and ECCS-1711056.}

%\IEEEauthorblockN{.............$^\dagger$, ..........$^\ast$, ............$^\dagger$\\}
%\IEEEauthorblockA{$^\ast$Department of ............, University of ............\\
%Email: 
%}

%\IEEEauthorblockN{name}
%\IEEEauthorblockA{Department\\
%University of\\
%address\\
%Email: 
%}}

%\and
\IEEEauthorblockN{Salman Habib and J{\"o}rg Kliewer}
\IEEEauthorblockA{
	Helen and John C.~Hartmann Department of Electrical and Computer Engineering \\
	New Jersey Institute of Technology, Newark, NJ 07102\\
	Email: sh383@njit.edu, jkliewer@njit.edu}
}
%\IEEEoverridecommandlockouts
% make the title area
\maketitle

\begin{abstract} 
We consider binary spatially coupled (SC) low density measurement matrices
for low complexity reconstruction of sparse signals via the interval passing
algorithm (IPA). The IPA is known to fail due to the presence of harmful
sub-structures in the Tanner graph of a binary sparse measurement matrix, so
called termatiko sets. In this work we construct  array-based (AB) SC sparse
measurement matrices via algebraic lifts of graphs, such that the number of
termatiko sets in the Tanner graph is minimized. To this end, we show for
the column-weight-three case that the most critical termatiko sets can be
removed by eliminating all length-12 cycles associated with the Tanner
graph, via algebraic lifting. As a consequence, IPA-based reconstruction
with SC measurement matrices is able to provide an almost error free
reconstruction for significantly denser signal vectors compared to uncoupled
AB LDPC measurement matrices. 
%\textcolor{blue}{Our results also indicate that algebraic constructions have the potential to %increase termatiko distance beyond $6$.}
%We then show that IPA decoding combined with SC measurement matrices significantly outperforms %decoding based on uncoupled sparse measurement matrices in reconstruction performance for a fixed %sparsity degree of the signal vector. Moreover, we verify that algebraic lifting scheme is more %effective for breaking small termatiko sets than a cutting vector based approach, resulting in %improved reconstruction performance. 
\end{abstract}

% no keywords
% For peer review papers, you can put extra information on the cover
% page as needed:
% \ifCLASSOPTIONpeerreview
% \begin{center} \bfseries EDICS Category: 3-BBND \end{center}
% \fi
% For peerreview papers, this IEEEtran command inserts a page break and
% creates the second title. It will be ignored for other modes.
%\IEEEpeerreviewmaketitle

\vspace{-2ex}
\section{Introduction}
\vspace{-1ex}
\label{Intro}

Compressed sensing \cite{CT06}, \cite{Don06} is a tool for estimating a sparse signal $\boldsymbol{x}\in \mathbb{R}^n$ of sparsity order $k$ from a compressed version of the signal $\boldsymbol{y}\in \mathbb{R}^m$, where $k\ll n$ and $m\ll n$. The compressed signal can be obtained by taking $m$ random linear projections of the original signal via the operation $\boldsymbol{y}=A\boldsymbol{x}$, where $A\in \mathbb{R}^{m\times n}$ is an $m\times n$ measurement matrix. 
 
A straightforward way for reconstructing the signal is to find a vector $\boldsymbol{\hat{x}}$ with the smallest $l_0$ norm. However, as its complexity is NP-hard, this approach is rendered infeasible for most practical applications \cite{Don06}. A more efficient approach based on linear programming (LP), called Basis Pursuit, has been proposed in \cite{CDS98}, which however is still too complex for applications that require fast reconstruction. To overcome these complexity issues, message passing schemes such as verification decoding and iterative thresholding algorithms have been proposed for reconstructing compressed signals \cite{WY13}, \cite{MalDon10}. An improved messaging passing algorithm known as Approximate Message Passing (AMP) is proposed in \cite{DonMalMon09}, which has an identical sparsity to sampling ratio trade-off as LP, albeit at a much lower computational complexity. 

The interval-passing algorithm (IPA) was first proposed in \cite{CSW10} for both binary and non-negative real measurement matrices. For  measurement matrices derived from parity check matrices of LDPC codes, the IPA is known to fail due to the presence of stopping sets. In particular, in \cite{Ravn12} it is shown that if the  Tanner graph associated with the support of a signal $\boldsymbol{x}$ contains a non-empty stopping set, then the IPA fails to fully recover $\boldsymbol{x}$, but some of the samples inside these sets can be recovered. In \cite{Yauh16} a complete graphical description of harmful substructures causing a recovery failure, the so called termatiko sets, is provided.
%the so called termatiko sets, which also include stopping sets as a special
%case.
In particular, if the Tanner graph associated with the support of $\boldsymbol{x}$ contains a termatiko set, then the IPA completely fails to recover the signal. 

In this work we are mainly interested in the reconstruction performance of
array-based (AB) spatially coupled (SC) measurement matrices, obtained by
coupling regular AB LDPC code-based measurement matrices. Note that AB SC
LDPC codes can be constructed via an edge-spreading process applied to a
base Tanner graph of the LDPC block code (BC), yielding an SC
protograph. 
%\textcolor{blue}{It has been shown in prior work \cite{LMC10}
%  that in case of column-weight-three AB LDPC codes, the minimum distance
%  and stopping distance are equal, where the stopping distance is the size
%  of the smallest stopping set in the code's Tanner graph. %Moreover, the
%  number of minimum codewords and minimum stopping sets are also the same
%  for these codes \cite{LMC10}. 
Recently, general edge-spreading schemes \cite{MR17} have been proposed as
an extension of the widely used cutting vector approach \cite{FZ99} for
constructing SC codes. Additionally, \cite{MR17} considers the design of
generalized cutting vectors with the objective of maximizing the minimum
distance of the corresponding SC protograph, thus also maximizing the size
of the smallest stopping set in the Tanner graph of the code \cite{R15}.
%
% Thererefore, these constructions are also useful for maximizing the stopping distance.
%
In \cite{ASCJ17} we have proposed a new algebraic lifting
strategy for constructing AB SC LDPC codes, which outperforms existing
schemes  in terms of reducing critical substructures in the
Tanner graph of the AB SC code.
 %According to Proposition III.7 of \cite{ASCJ17}, the algebraic lifting approach is the most generalized SC code construction scheme %since all previously mentioned schemes are a proper subset of it. Hence, all constructions proposed in \cite{MR17} and \cite{R15} for %increasing the stopping distance can also be reproduced via algebraic lifting. Nonetheless, in terms of IPA reconstruction %performance, the metric that determines the quality of a measurement matirx is its termatiko distance, which is the size of its %smallest termatiko set, and it should be as large as possible. Eventhough the termatiko distance of a measurement matrix is %upper-bounded by the stopping distance, it is difficult to relate these two quantities because a stopping set cannot always be split %into two termatiko sets of equal size \cite{YR18}. Therefore, it is worth investigating algebraic lifting constructions specifically %designed to improve the termatiko distance of the measurement matrix.

Also, it is known that AB block measurement matrices are able
to outperform Gaussian measurement matrices under AMP decoding
\cite{LX13}. Further, in \cite{Pfister10} it is shown that SC LDPC
measurement matrices obtained from randomly generated regular LDPC BCs
outperform uncoupled measurement matrices under verification
decoding. However, to the best of our knowledge, the use of binary  AB SC
LDPC code-based measurement matrices under IPA reconstruction has not been
studied so far. In particular, we
  propose to construct binary SC AB measurement matrices 
%based on our
%  earlier work on algebraic lifts of graphs \cite{ASCJ17} 
such that the number of both size-three and size-six termatiko sets in the underlying Tanner graph is
  minimized. As one of our main results we show that for the
  column-weight-three AB case, these termatiko sets can be removed
  efficiently by eliminating length-12 cycles in the Tanner graph.
% Our experiment suggests that by eliminating these cycles, we can
% potentially obtain SC measurement matrices with termatiko distance %larger
% than $6$.} 
As a consequence, IPA-based reconstruction in conjunction with binary SC LDPC code based measurement matrices is able to provide a low complexity, almost error free reconstruction for significantly denser signal vectors compared to uncoupled AB LDPC based measurement matrices.

% show that AB SC LDPC matrices significantly outperform uncoupled codes
% under IPA based
% decoding.
% We also compare the IPA reconstruction performance of AB SC LDPC matrices obtained via the %cutting vector and algebraic constructions from \cite{ASCJ17} by optimizing them based on the %number of minimum termatiko sets, and show that the latter has a superior reconstruction fidelity. %The IPA even outperforms the conventional LP reconstruction method when algebraic lifting based %binary AB SC measurement matrices are considered.

%This paper is organized as follows: necessary background is provided in Section II. In Section III we consider the structure of minimum termatiko sets in the Tanner graph of AB codes and show that they consists of 12-cycles. In Section IV we provide a simple enumeration technique for minimum termatiko sets based on the enumeration of 12-cycles in the AB Tanner graph, and we also provide a strategy for the optimization of AB SC measurement matrices based on the number of minimum termatiko sets using both the cutting vector, and algebraic constructions. In Section V we provide simulation results comparing the IPA decoding performance of these construction schemes compared to the uncoupled code. Section VI concludes the paper.

\vspace{-0.5ex}
\section{Preliminaries}
\vspace{-0.5ex}

\subsection{Algebraic lifting}
\label{sec:alg_lft}

Let the Tanner graph associated to a $m\times n$ binary matrix $A$ be
represented by $G=(V\cup C,E)$, where $V=\{v_1,v_2,\ldots,v_n\}$ is a set of
variable nodes (VNs), $C=\{c_1,c_2,\ldots,c_m\}$ is a set of check nodes
(CNs), and $E= \{(v_i,c_j)|v_i\in V, c_j\in C, A(j,i)=1\}$ is the set of
edges connecting $v_i$ to $c_j$, for $i=\{1,\ldots,m\}$ and
$j=\{1,\ldots,n\}$. We also denote the set of neighbors for each node $v_i$
and $c_j$ as $\mathcal{N}(v_i)=\{c_j\in C|(v_i,c_j)\in E\}$ and
$\mathcal{N}(c_j)=\{v_i\in V|(v_i,c_j)\in E\}$, respectively. In general, a
{\em degree $J$ lift} of $G$ is a graph $\hat{G}$ with VN set $\hat{V} =
\{v_{1_1},\ldots, v_{1_J},\ldots, v_{n_1},\ldots, v_{n_J}\}$ of size $nJ$
and CN set $\hat{C} = \{c_{1_1},\ldots, c_{1_J},\ldots, c_{m_1},\ldots,
c_{m_J}\}$ of size $mJ$ and for each $e \in E$, if $e = (v_i,c_j)$ in $G$,
then there are $J$ edges from $\{v_{i_1},\ldots, v_{i_J}\}$ to
$\{c_{j_1},\ldots, c_{j_J}\}$ in $\hat{G}$ in a one-to-one mapping. The
graph $\hat{G}$ can be obtained algebraically by assigning
permutations to each of the edges in $G$ so that if $e=(v_i,c_j)$ is
assigned a permutation $\tau(k)\in\{1,\ldots,J\}$, the corresponding edges in $\hat{G}$ are $(v_{i_k},c_{j_{\tau(k)}})$ for $1\leq k\leq J$.
%The edge $e$ is considered as directed for the purpose of lifting. 

The protograph approach  to the construction of SC LDPC codes involves the
base Tanner graph with a parity check matrix represented as $H(\gamma,p)$, where $p$ is odd and $p\geq \gamma$. In case of AB codes, this matrix is given as
\[ H(\gamma,p)=
\begin{bmatrix}
I & I & I & \cdots & I \\
I & \sigma & \sigma^{2} & \cdots & \sigma^{p-1} \\
\vdots & \vdots & \vdots & \cdots & \vdots \\
I & \sigma^{\gamma-1} & \sigma^{2(\gamma-1)} & \cdots & \sigma^{(\gamma-1)(p-1)}
\end{bmatrix},
\]
where $I$ and $\sigma^z$ for $z\in \{1,\ldots,2(p-1)\}$ are identity and
permutation matrices, resp., of dimension $p \times p$. This matrix
can also be considered as a 2-D array of submatrices where each row (column)
of matrices denotes a row (column) \textit{group} with $p$ column groups and
$\gamma$ row groups in total. A SC protograph is then obtained from
$H(\gamma,p)$ via \emph{edge-spreading}. The idea is to split $H(\gamma, p)$
into a sum of $m+1$ matrices of the same dimension as $H(\gamma,
p)=H_0+H_1+,\ldots,H_m$, where $m$ represents the memory of the code. These
matrices are arranged as
% Next, the $H(\gamma,p)$ matrix is copied and {\em coupled} to form the {\em SC protograph} associated with the parity check matrix $H(\gamma,p,L)$ of dimension $\gamma (L+1)p \times Lp^2$, where $L$ is the coupling length. Note that in terms of matrices, edge-spreading or algebraic lifting is equivalent to splitting $H(\gamma, p)$ into a sum of $m+1$ matrices of the same dimension, so that we have $H(\gamma, p)=H_0+H_1+,\ldots,H_m$, where $m$ represents the memory of the code. Hence, in matrix form we obtain 
\[ H(\gamma, p, L)=
\begin{bmatrix}
H_{0} & & & & \\
H_{1} & H_{0} & & &  \\
& \ddots & \ddots & & \\
& & H_{m} & \cdots & H_{0}  \\
& & & \ddots & \vdots \\
& & & & H_{m}
\end{bmatrix}
\]
to form the parity-check matrix of a terminated SC protograph $H(\gamma, p,
L)\in \mathbb{F}_2^{\gamma (L+1)p\, \times\, Lp^2}$, where $L$ is the
coupling length and $\mathbb{F}_2$ is the binary field.
The final SC LDPC measurement matrix $H(\gamma,p,L,J) \in
\mathbb{F}_2^{\gamma (L+1)Jp\, \times\, LJp^2}$ is then obtained by a {\em terminal lift} of $H(\gamma, p, L)$, where $J$ is the terminal lifting parameter.

For algebraic lifting, let $\tau_L^\kappa$ be a $L\times L$ permutation matrix obtained by left shifting the identity permutation by an amount of $\kappa$, where $0\leq \kappa \leq m$. Then, the SC protograph corresponding to $H(\gamma,p,L)$ can alternatively be constructed by lifting each of the edges of the base protograph by a $\tau_L^\kappa$ matrix; this is equivalent to replacing the non-zero entires of $H(\gamma,p)$ by
$\tau_L^\kappa$, and the zero entries by all-zero matrices of the same size,
respectively. In the same way, the check matrix $H(\gamma,p,L,J)$ of the
final  SC LDPC code can be obtained by lifting each of the edges of the SC
protograph by any $J\times J$ permutation~matrix~\cite{ASCJ17}. 

\subsection{Compressed sensing and the IPA}

Let $\boldsymbol{x}\in \mathbb{R}^n$ be an $n$ dimensional $k$-sparse signal (which means it has at most $k$ nonzero entries). We consider the recovery of $\boldsymbol{x}$ from measurements $\boldsymbol{y} = A\boldsymbol{x} \in \mathbb{R}^m$, where $m \ll n$ and $k \ll n$, and $A$ is the $m\times n$ measurement matrix. The IPA denotes an iterative algorithm $\text{IPA}(y,A)$ to reconstruct a nonnegative real signal $\boldsymbol{x}\in \mathbb{R}^n$ from a measurement vector $\boldsymbol{y}$. It has been stated in \cite{Yauh16} that the IPA reconstruction performance is independent of whether binary or non-negative measurement matrices $A$ and signal vectors $\boldsymbol{x}$ are used. Therefore, without loss of generality we consider $A\in\mathbb{F}_2^{ \times n}$ and $\boldsymbol{x}\in\mathbb{F}_2^{n}$. %In the
                                %following, we consider the IPA in
                                %conjunction with binary measurement
                                %matrices \cite{CSW10}. As a result, we also
                                %consider $\boldsymbol{x}$ to be binary. 
We also denote the elements of $\boldsymbol{x}$ and $\boldsymbol{y}$ as $\boldsymbol{x}=[x(v_1),\ldots,x(v_n)]^T$ and $\boldsymbol{y}=[y(c_1),\ldots,y(c_m)]^T$, respectively. Recovery takes place by iteratively exchanging messages on the Tanner graph of $A$, where the measurement nodes will be denoted as VNs and the function nodes as CNs in the following.

\vspace{-0.2ex}
\subsection{Stopping sets in AB measurement matrices}
\label{sec:stopset}
Stopping sets are harmful structures in the Tanner graph of $A$ that can
cause the IPA to fail. In this work, we first analyze the structure of
minimum stopping sets in AB SC LDPC  measurement matrices which are the  most harmful to the IPA decoder. 
%\textcolor{blue}{Later, we also consider some larger order stopping sets.}
% Let $S(M)=\{v_1,\ldots,v_M\} \subset V$ represent a stopping set consisting of $M$ VNs in the Tanner graph of a %measurement matrix. 

\vspace{-0.5ex}
\begin{definition}[\cite{Proietti02}]
\label{def:stpset}
A stopping set $S(M)=\{v_1,\ldots,v_M\} \subset V$  is a non-empty subset of
the set of $M$ variable nodes $V$ such that all neighbors of $S(M)$ are connected to it at least twice.
\end{definition}

In the following, we focus on AB parity check matrix with a column weight of $\gamma=3$ for the sake of simplicity\footnote{Note that in the following "AB codes" refers to AB codes with $\gamma=3$.}. For $\gamma=3$, each VN has $3$ neighbors, so there must be $3M$ edges connected to $S(M)$. The Tanner graph of an AB code also consists of \emph{cycles} with the following structural properties.

\begin{rem}
\label{rem:cyc}
	For $\gamma=3$, a cycle of length $\ell$ in an AB code consists of $\ell$ edges that are connected to $\ell/2$ CNs (resp. VNs) of degree $2$ with respect to the VNs (resp., CNs) of the cycle. By the pattern consistency condition \cite{DAnantharam10} each CN associated to the cycle is connected to a distinct pair of VNs of the cycle.  
\end{rem}

We now address  the structure of some \emph{small} stopping sets of size $\leq 12$. Let $\mathcal{N}(S(M))$ be the set of all neighboring CNs of $S(M)$, and let $e(S(M))$ be the set of all the edges connecting $S(M)$ to $\mathcal{N}(S(M))$. 

%\subsubsection{Minimum stopping sets}
\begin{rem}
\label{rem:min_stop}
	For $\gamma=3$, the minimum stopping set $S(6)$ in an AB code consists of $6$ degree $3$ VNs that are connected via $18$ edges to $9$ CNs of degree $2$ with respect to the VNs in $S(6)$. 
\end{rem}

Since there are no 4-cycles in AB codes \cite{DAnantharam10}, a pair of
neighbors of $S(6)$ cannot be connected to the same pair of VNs in
$S(6)$. In $e(S(6))$, nine CNs are connected to nine VN pairs via $18$
edges. There are ${6\choose 2}=15$ possible pairs of VNs in $S(6)$. However,
we only consider nine pairs such that each VN of $S(6)$ appears in exactly
three pairs out of those nine (because the VN degree is $3$). This also implies that there exists two pairs out of those nine that have a common VN, and that is true for all VNs in $S(6)$. Thus, we obtain the following lemma.

\begin{lem}
\label{lem:stop_set}
	 For $\gamma=3$, the minimum stopping set $S(6)$ in an AB code
         consists of six VNs of degree $3$ that are connected via $18$ edges
         to nine CNs of degree $2$ with respect to the VNs in $S(6)$. Here,
         a pair of neighboring CNs of $S(6)$, denoted by $c$ and $c'$,
         respectively, that have a common neighbor $v_k$, must be connected
         to three VNs $\{v_i,v_k,v_q\}\in S(6)$ via four  edges $(c,v_i)$, $(c,v_k)$, $(c',v_k)$ and $(c',v_q)$, where $i\neq q\neq k$. 
\end{lem}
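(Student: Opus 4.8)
The plan is to lift the first sentence of the lemma --- that $S(6)$ comprises six degree-$3$ VNs joined by $18$ edges to nine CNs, each of degree $2$ with respect to $S(6)$ --- directly from Remark~\ref{rem:min_stop}, so that the work reduces to establishing the ``Here, \ldots'' clause about an arbitrary pair of neighboring CNs that share a VN of $S(6)$.

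So I fix a pair of (necessarily distinct) CNs $c,c'\in\mathcal{N}(S(6))$ with a common neighbor $v_k\in S(6)$. By Remark~\ref{rem:min_stop}, each of $c$ and $c'$ has degree exactly $2$ with respect to $S(6)$, hence $\mathcal{N}(c)\cap S(6)=\{v_k,v_i\}$ and $\mathcal{N}(c')\cap S(6)=\{v_k,v_q\}$ for some $v_i,v_q\in S(6)$; since the two $S(6)$-neighbors of each CN are distinct we automatically get $v_i\neq v_k$ and $v_q\neq v_k$. This already identifies the four edges joining $\{c,c'\}$ to $S(6)$ as $(c,v_i),(c,v_k),(c',v_k),(c',v_q)$, and what remains is to rule out $v_i=v_q$.

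The key (and only nontrivial) step is therefore the following: if $v_i=v_q=:v$, then, since $c\neq c'$ and $v\neq v_k$, the walk $c-v_k-c'-v-c$ is a genuine cycle of length $4$ in the Tanner graph, contradicting the absence of $4$-cycles in AB codes~\cite{DAnantharam10} (equivalently, the pattern-consistency property recalled in Remark~\ref{rem:cyc}). Hence $v_i,v_k,v_q$ are pairwise distinct, so the four edges $(c,v_i),(c,v_k),(c',v_k),(c',v_q)$ connect $c$ and $c'$ to exactly the three VNs $\{v_i,v_k,v_q\}\subseteq S(6)$, which is the assertion.

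I do not anticipate a serious obstacle: the whole content is the degree-$2$ bookkeeping inherited from Remark~\ref{rem:min_stop} together with the one-line $4$-cycle contradiction. The point most worth stating carefully is \emph{why} each of $c,c'$ has degree exactly $2$ toward $S(6)$ --- namely, that the $3\cdot 6=18$ edges emanating from $S(6)$ are split among its nine neighboring CNs, every one of which the stopping-set definition (Definition~\ref{def:stpset}) forces to be met at least twice, leaving no slack, so that each is met exactly twice.
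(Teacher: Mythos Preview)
Your proposal is correct and mirrors the paper's own justification, which appears in the paragraph immediately preceding the lemma: the paper invokes Remark~\ref{rem:min_stop} for the $6$-VN/$9$-CN/$18$-edge count and then argues that the absence of $4$-cycles in AB codes forces distinct CNs to meet distinct VN pairs, which is exactly your $v_i\neq v_q$ step. Your write-up is in fact slightly more explicit than the paper's about why each CN has degree exactly $2$ toward $S(6)$ and about the pairwise distinctness of $v_i,v_k,v_q$.
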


\vspace{-1ex}
%\noindent From the structure of $e(S(6))$ and Lemma \ref{lem:stop_set}, we also obtain the following corollary.

\vspace{-1ex}
\begin{cor}
\label{cor:type3CN}
	 There are two sets of VNs $V'\subset S(6)$ and $\hat{V}= S(6)\setminus V'$, where $|V'|=|\hat{V}|=3$, that are connected to all CNs in $\mathcal{N}(S(6))$. 
\end{cor}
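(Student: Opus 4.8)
The plan is to restate the corollary as a structural claim about an auxiliary graph on $S(6)$ and then use the algebraic form of $H(\gamma,p)$ to identify that graph. To $S(6)$ associate the graph $\mathcal{G}$ with vertex set $S(6)$ in which every $c\in\mathcal{N}(S(6))$ contributes the edge joining its two neighbours in $S(6)$; this is well defined because, by Remark~\ref{rem:min_stop}, each such CN has degree exactly $2$ with respect to $S(6)$. Since an AB code has no $4$-cycles \cite{DAnantharam10}, distinct CNs yield distinct edges, so $\mathcal{G}$ is a simple graph (cf.\ Lemma~\ref{lem:stop_set}); it has $6$ vertices and $9$ edges, and because every VN of $S(6)$ has degree $3$ it is $3$-regular. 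The statement to be proved is precisely that $\mathcal{G}$ is bipartite with both parts of size $3$: if $V',\hat V$ are the parts, each CN has one endpoint in $V'$ and one in $\hat V$, so both $V'$ and $\hat V$ reach all of $\mathcal{N}(S(6))$; conversely a partition of $S(6)$ into two triples each meeting all CNs is a proper $2$-colouring of $\mathcal{G}$. A simple $3$-regular graph on $6$ vertices has a $2$-regular complement, hence a complement that is a vertex-disjoint union of simple cycles, i.e.\ $C_6$ or $C_3+C_3$; so $\mathcal{G}$ is either the triangular prism $K_3\,\square\,K_2$ or $K_{3,3}$, and it suffices to rule out the prism.

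To rule it out I would pass to coordinates for $H(\gamma,p)$: write a VN as a pair $(C,D)$ with $C$ its column group and $D$ its offset modulo $p$; then the unique neighbour of $(C,D)$ in row group $a$ is the CN with offset $aC+D\bmod p$, so two VNs share a CN in row group $a$ iff $a(C-C')\equiv D'-D\pmod p$. A triangle of $\mathcal{G}$ is a length-$6$ cycle of the Tanner graph on three VNs, and by Remark~\ref{rem:cyc} its three CNs are distinct and join the three VN-pairs of the cycle bijectively; moreover they lie in three distinct row groups, since two of them in one row group $a$ would force the three cycle-VNs to share the same value of $aC+D$ and hence to be joined by a single CN of degree $3$. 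Now suppose $\mathcal{G}$ is the prism with triangles $\{v_1,v_2,v_3\}$, $\{v_4,v_5,v_6\}$ and matching $v_1v_4,v_2v_5,v_3v_6$. After relabelling inside the first triangle we may assume its row-$0$, row-$1$, row-$2$ CNs join $\{v_1,v_2\},\{v_2,v_3\},\{v_1,v_3\}$ respectively; each matching CN is then forced, e.g.\ the CN joining $v_1,v_4$ is neither CN already incident to $v_1$ inside the triangle (those reach $v_2,v_3\notin\{v_4,v_5,v_6\}$), so it must be $v_1$'s remaining, row-$1$, CN. Likewise the $v_2v_5$ CN is row-$2$ and the $v_3v_6$ CN is row-$0$, and this in turn pins down which pairs the three CNs of the second triangle join. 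Writing out the resulting nine congruences and eliminating the offsets (using that $p$ is odd, so $2$ is invertible modulo $p$) collapses them to $C_1\equiv C_4$ and then to $D_1\equiv D_4$, i.e.\ $v_1=v_4$, contradicting the distinctness of the VNs of $S(6)$. Hence $\mathcal{G}\cong K_{3,3}$, and its two sides are the desired $V'$ and $\hat V$.

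I expect the bookkeeping in the second step, rather than any conceptual difficulty, to be the main obstacle: one must justify the initial relabelling of the first triangle as genuinely without loss of generality, and then determine the row-group type of each of the nine CNs by a ``unique unused CN at this endpoint'' argument repeated for every matching edge and for the second triangle. A convenient shortcut is that once these types are fixed the whole configuration is determined by $C_1,C_3,C_4$ together with a single offset, so equating two independent expressions for $C_6$ immediately gives $C_1\equiv C_4$; one should also observe that the argument uses nothing about $p$ beyond its being odd, matching the standing hypothesis on $H(\gamma,p)$.
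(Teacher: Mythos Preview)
Your argument is correct, and in fact it supplies a step that the paper omits. The paper states Corollary~\ref{cor:type3CN} immediately after Lemma~\ref{lem:stop_set} with no separate justification; the preceding paragraph and the lemma together establish only that the nine CNs determine nine distinct VN-pairs with every VN appearing in exactly three pairs, i.e., in your terms, that the auxiliary graph $\mathcal{G}$ on $S(6)$ is simple and $3$-regular. That alone does not force $\mathcal{G}\cong K_{3,3}$: the triangular prism is the other possibility, and for a generic girth-$6$ code nothing in the paper's discussion rules it out.

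Your contribution is precisely the elimination of the prism via the AB coordinates. The row-group bookkeeping is sound: the three CNs of each triangle must occupy distinct row groups (else two would coincide), the matching CNs are forced by the ``unique unused row at this endpoint'' argument, and then the second triangle's row groups are pinned down by intersection. The resulting nine congruences collapse, after inverting $2$ modulo the odd prime $p$, to $C_1\equiv C_4$ and $D_1\equiv D_4$, i.e.\ $v_1=v_4$. I checked the linear algebra independently and it goes through exactly as you describe. So your proof is not merely an alternative route but a genuine completion of the paper's exposition; the price is the case-tracking you flag in your second paragraph, which is unavoidable if one wants an honest argument here.
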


\vspace{-2ex}

\vspace{-0.5ex}
\section{Termatiko Sets in AB Measurement Matrices}
\vspace{-0.5ex}
\subsection{Preliminaries}
In \cite{Yauh16} it is  shown that stopping sets may not cause a total failure of the IPA. Under some conditions, some of the non-zero values of the signal can be recovered even if the VNs in the Tanner graph of the measurement matrix corresponding to the non-zero values are associated with a stopping set. However, there are sets of VNs inside a stopping set, termed termatiko sets, that cause a total failure of the IPA if the support of $\boldsymbol{x}$, $\text{supp}(\boldsymbol{x})=\{v\in V: x(v)\in \boldsymbol{x}, x(v)\neq 0\}$, is a termatiko set. 

\vspace{-0ex}
\begin{definition}[\cite{Yauh16}]
\label{def:tset}
A subset $T_{w,M}\subseteq S(M)$ is a termatiko set of size $w\leq M$ if and only if the function $\text{IPA}(A\boldsymbol{x}_{T_{w,M}},A)$ returns $\hat{\boldsymbol{x}}=\boldsymbol{0}$, where $\boldsymbol{x}_{T_{w,M}}$ is a binary vector with  $\textrm{supp}(\boldsymbol{x}_{T_{w,M}})=T_{w,M}$. 
\end{definition}
\vspace{-0.5ex}

We denote by $N$ the set of CNs connected to $T_{w,M}$. Moreover, we denote by $\hat{S}=\{v\in V\setminus T_{w,M}: \mathcal{N}_N(v)=\mathcal{N}(v)\}$ the set of remaining VNs outside $T_{w,M}$ connected only to $N$, where $\mathcal{N}_N(v)$ is the set of neighbors of $v$ in $N$. $T_{w,M}$ exists only if for each $c\in N$ one of the following conditions is true \cite{Yauh16}:

%Note from this definition that $T_{w,M}\cup \hat{S}=S(M)$. Hence, a termatiko set is always a subset of some stopping set, and we can also state that $\hat{S}=S(M)\setminus T_{w,M}$.

\begin{enumerate}
\item[(i)] A CN $c \in N$ is connected to $\hat{S}$.
\item[(ii)] If $c \in N$ is not connected to $\hat{S}$, then it must have at
  least two neighbors belonging to set $T_{w,M}$ satisfying the following
  constraint: all CNs $c'\! \in\! N$ connected to these neighbors must  have
  at least two neighbors in $T_{w,M}$.
%there must be at least two other CNs $c'\in N$ connected to $\hat{S}$.

\end{enumerate}

\subsection{Minimum termatiko sets}
In the following we  analyze the structure of minimum termatiko sets, residing in minimum stopping sets, which have the smallest possible value of $w>0$. For AB measurement matrices with $\gamma=3$, this minimum termatiko set of this type is denoted as $T_{3,6}$. %However, note that this may not be the only minimum termatiko set in column weight $3$ AB codes. 
%As discussed above, $M=6$ constitutes the minimum stopping set in AB codes of column weight $3$, %and therefore we focus on $T_{3,6}$ termatiko sets.
\vspace{-1ex}

\begin{prop}[see also \cite{YR18}]
\label{prop:tset3}
A set of three VNs in $S(6)$ constitutes a $T_{3,6}$ set if it is connected to all nine CNs in $\mathcal{N}(S(6))$. Also, a $S(6)$ stopping set consists of two $T_{3,6}$ sets.
\end{prop}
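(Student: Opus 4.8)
The plan is to prove the two assertions separately, invoking the termatiko‑set characterization via conditions~(i)/(ii) for the first and Corollary~\ref{cor:type3CN} for the second. The key preliminary step for the first assertion is an edge count. Let $T\subseteq S(6)$ with $|T|=3$ be connected to all nine CNs in $\mathcal{N}(S(6))$. By Lemma~\ref{lem:stop_set} every VN of $S(6)$ has degree $3$, so $T$ is incident to exactly nine edges, and $|\mathcal{N}(S(6))|=9$; since $T\subseteq S(6)$ all nine of these edges land in $\mathcal{N}(S(6))$, and as they must cover all nine of its CNs while the Tanner graph is simple, each $c\in\mathcal{N}(S(6))$ has \emph{exactly one} neighbour in $T$. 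In particular the CN set $N$ attached to $T$ in the definition of a termatiko set equals $\mathcal{N}(S(6))$.

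With this in hand I would finish the first assertion as follows. Set $\hat V:=S(6)\setminus T$; it consists of three VNs, each of degree $3$ and all of whose neighbours lie in $\mathcal{N}(S(6))=N$, so $\hat V\subseteq\hat S$. Every $c\in N$ has degree $2$ with respect to $S(6)$ (Lemma~\ref{lem:stop_set}), and its two $S(6)$‑neighbours split as one in $T$ and one in $\hat V\subseteq\hat S$; hence condition~(i) holds at every $c\in N$, while condition~(ii) is vacuous since no $c\in N$ has two neighbours in $T$. Therefore $T$ is a termatiko set, and because $T\subseteq S(6)$ and $|T|=3\le 6$ it is a $T_{3,6}$ set. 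For the second assertion, Corollary~\ref{cor:type3CN} already supplies the partition $S(6)=V'\cup\hat V$ with $V'\cap\hat V=\emptyset$, $|V'|=|\hat V|=3$, and both $V'$ and $\hat V$ connected to all nine CNs of $\mathcal{N}(S(6))$; applying the first assertion to each of $V'$ and $\hat V$ shows both are $T_{3,6}$ sets, so $S(6)$ decomposes into two $T_{3,6}$ sets.

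The one step that needs care is the edge count forcing each neighbouring CN of $T$ to meet $T$ exactly once: this is what rules out condition~(ii) and makes condition~(i) the operative one, after which the membership $\hat V\subseteq\hat S$ and the verification of~(i) are immediate from Lemma~\ref{lem:stop_set}. If one prefers not to rely on the sufficiency of conditions~(i)--(ii) from \cite{Yauh16}, the same conclusion follows by tracing $\text{IPA}(A\boldsymbol{x}_{T},A)$ directly: every $c\in N$ carries measurement value $1$, every VN in $T\cup\hat V$ is initialised with interval $[0,1]$, and since each $c\in N$ is degree‑$2$ over $S(6)$ with exactly one neighbour in $T$ (its remaining incident VNs all carrying value $0$), no check‑to‑variable message ever updates these intervals, so they never collapse and the IPA returns $\hat{\boldsymbol{x}}=\boldsymbol{0}$.
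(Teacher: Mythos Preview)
Your proof is correct and shares the paper's core idea: verify Condition~(i) of the termatiko-set characterization for the two size-$3$ subsets supplied by Corollary~\ref{cor:type3CN}. Your explicit edge count---nine edges from $T$ hitting nine CNs, so each $c\in N$ meets $T$ exactly once and the other $S(6)$-neighbour of $c$ lies in $\hat V\subseteq\hat S$---is a cleaner justification of Condition~(i) than the paper's, which simply asserts it. The one substantive difference is that the paper additionally proves a converse: any three-VN subset $\tilde V\subset S(6)$ not equal to $V'$ or $\hat V$ is connected to fewer than nine CNs and is shown \emph{not} to be a termatiko set, so $S(6)$ contains \emph{exactly} two $T_{3,6}$ sets. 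You establish the decomposition $S(6)=V'\cup\hat V$ into two $T_{3,6}$ sets but do not rule out further ones; since the proposition as stated asks only for the ``if'' direction and for the existence of two such sets, your argument suffices, though the paper reads ``consists of two'' as ``exactly two'' and proves the stronger claim.
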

\vspace{-1ex}

\begin{proof}
As in \cite{Yauh16} we assume without loss of generality that $V=T_{w,M}\cup
\hat{S}$. Recall from Corollary \ref{cor:type3CN} that there exists two sets
of VNs $V'\subset S(6)$ and $\hat{V}= S(6)\setminus V'$, where
$|V'|=|\hat{V}|=3$, and each of them are connected to all nine  CNs in $\mathcal{N}(S(6))$. Thus, according to Condition (i) above we obtain that $T_{3,6}=V',\hat{S}=\hat{V}$, and $T_{3,6}=\hat{V},\hat{S}=V'$, respectively, and $N=\mathcal{N}(S(6))$.

On the other hand, assume now that a set of three VNs $\tilde{V}\subset S(6)$ is not equal to $V'$ or $\hat{V}$. Then, according to the structure of $e(S(6))$, the total number of CNs connected to $\tilde{V}$ is less than nine. If we assume for a moment that $\tilde{V}= T_{3,6}$, then this would imply that $|N|<|\mathcal{N}(S(6))|=9$; in other words $N\neq \mathcal{N}(S(6))$. Then, due to the properties of $e(S(6))$, there would be less than nine CNs in the set $\mathcal{N}(S(6))\setminus N$ that has neighbors in the set $\tilde{S}=S(6)\setminus \tilde{V}=S(6)\setminus T_{3,6}$. However, this also implies that $\tilde{S}\neq \hat{S}$. Consequently, $\tilde{V}\neq T_{3,6}$.% which verifies the claim.
\end{proof}

%\vspace{-0.2cm}
 %To clarify Proposition \ref{prop:tset3}, we provide two examples shown in
 %Figs. \ref{fig:not_tset} and \ref{fig:tset},
 %respectively. 
Fig. \ref{fig:not_tset} shows that a set of VNs $\{v_2,v_3,v_5\}$ is not connected to all the neighbors of $S(6)$, hence it cannot form a $T_{3,6}$ set, since if it did, it would imply that
 $\hat{S}=\{v_1,v_4,v_6\}$. This contradicts the definition of $\hat{S}$ as the set $\{v_1,v_4,v_6\}$ is not connected to all green CNs (which are neighbors of the candidate termatiko set $\{v_2,v_3,v_5\}$ in this example). %On the other hand, Fig. \ref{fig:tset} shows a valid termatiko
 %set configuration.

%that the VNs $\{v_1,v_2,v_5\}$ are connected to all the neighbors of $S(6)$, and therefore %represents a $T_{3,6}$ termatiko set with $\hat{S}=\{v_3,v_4,v_6\}$. It is interesting to note %that if $\{v_3,v_4,v_6\}$ is considered as a $T_{3,6}$ set instead, then $\hat{S}=\{v_1,v_2,v_5\}$.

\begin{figure}[H]
\centering
\includegraphics[scale=1.1]{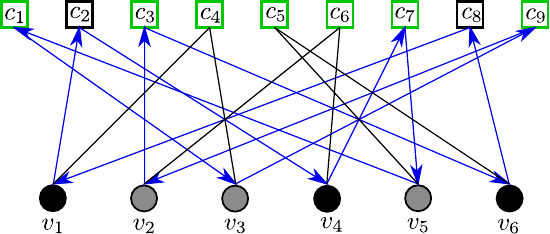}
%\vspace{-0.7cm}
\caption{Example of a case where a set of VNs $\{v_2,v_3,v_5\}$ cannot form a termatiko set in $S(6)=\{v_1,v_2,v_3,v_4,v_5,v_6\}$. The underlying 12-cycle is shown in blue. The VNs $\{v_1,v_2,v_5\}$ are connected to all neighbors of $S(6)$ and therefore  represent a $T_{3,6}$ termatiko set with $\hat{S}=\{v_3,v_4,v_6\}$.}
\vspace{-2.5ex}
\label{fig:not_tset}
\end{figure}

% \begin{figure}[H]
% \centering
% \includegraphics[scale=1.1]{tset2b.pdf}
% %\vspace{-0.7cm}
% \caption{Depiction of a $T_{3,6}$ in $S(6)$ shown in red.
% % Set $N$ and $\hat{S}$ are shown in green and blue, respectively. 
% Note that the red (and blue) VNs are connected to all nine CNs.}
% \label{fig:tset}
% \end{figure}

\vspace{-1ex}
\begin{rem}
\label{rem:T_6_6}
	From the proof of Proposition 1 we have seen that a $T_{3,6}$ set can exist in $S(6)$ in two possible configurations: $T_{3,6}=V',\hat{S}=\tilde{V}$, and $T_{3,6}=\tilde{V},\hat{S}=V'$. In other words, both $V'$ and $\tilde{V}$ are termatiko sets. The fact that $V'\cup \hat{V}=S(6)$  satisfies Condition (ii) above with the set $\hat{S}$ being the empty set implies that $S(6)$ is also a $T_{6,6}$ termatiko set. 
\end{rem}

\begin{lem}
\label{lem:stp_12cycle}
	 A $T_{6,6}$ set contains at least two 12-cycles. 
\end{lem}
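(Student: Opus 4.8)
The strategy I would follow is to reformulate the claim as a count of Hamiltonian cycles in $K_{3,3}$. First I would record that by Definition~\ref{def:tset} a $T_{6,6}$ set is a six-element subset of a six-element stopping set $S(6)$, hence $T_{6,6}=S(6)$; so it suffices to exhibit at least two $12$-cycles inside $S(6)$. Next I would use Corollary~\ref{cor:type3CN} to partition $S(6)=V'\cup\hat{V}$ with $|V'|=|\hat{V}|=3$, where each of $V'$ and $\hat{V}$ is connected to all nine CNs of $\mathcal{N}(S(6))$. Since every VN has degree $3$, the three VNs of $V'$ send exactly nine edges to the nine CNs, so each CN has precisely one neighbour in $V'$; because each CN has degree exactly two with respect to $S(6)$ (Remark~\ref{rem:min_stop}), its other neighbour lies in $\hat{V}$. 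Assigning to each CN the pair $(v,v')\in V'\times\hat{V}$ it joins then gives a map from the nine CNs into the nine ``cross'' pairs, and this map is injective since two CNs joining the same pair would close a $4$-cycle, which AB codes do not contain; hence it is a bijection. Equivalently, contracting every degree-$2$ CN of $S(6)$ to an edge between its two VN neighbours produces exactly the complete bipartite graph $K_{3,3}$ on parts $V'$ and $\hat{V}$.

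With that reformulation in hand, the second step is to match $12$-cycles with Hamiltonian cycles. A $12$-cycle in the Tanner graph of $S(6)$ alternates six distinct VNs with six distinct CNs, and since $S(6)$ has only six VNs it must pass through all of them. Replacing each on-cycle CN by the $K_{3,3}$-edge joining its two VN neighbours turns the $12$-cycle into a $6$-cycle through all six VNs of $K_{3,3}$, i.e.\ a Hamiltonian cycle; conversely, replacing each edge of a Hamiltonian cycle of $K_{3,3}$ by its unique associated CN recovers a $12$-cycle of $S(6)$. These operations are mutually inverse (distinct edges $\leftrightarrow$ distinct CNs), so the $12$-cycles of $S(6)$ are in bijection with the Hamiltonian cycles of $K_{3,3}$.

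Finally I would invoke the elementary fact that $K_{3,3}$ has $\tfrac{1}{2}\,3!\,2!=6$ Hamiltonian cycles (or simply exhibit two of them explicitly), which yields at least two $12$-cycles in $S(6)=T_{6,6}$, as claimed. The only step requiring genuine care is the identification of the contracted graph with $K_{3,3}$: one must verify both that each CN has exactly one neighbour in $V'$ and one in $\hat{V}$, and that the CN-to-pair assignment is injective, which is precisely where the absence of $4$-cycles in AB codes and the pattern-consistency structure of Remark~\ref{rem:cyc} are used; everything after that is bookkeeping, and in fact it shows $S(6)$ contains exactly six $12$-cycles.
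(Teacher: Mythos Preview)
Your argument is correct and in fact sharper than what is needed: by contracting each degree-$2$ CN to an edge between its two VN neighbours you identify the induced structure on $S(6)$ with $K_{3,3}$, and then the $12$-cycles of $S(6)$ are precisely the Hamiltonian $6$-cycles of $K_{3,3}$, of which there are six. The verification that each CN has exactly one neighbour in $V'$ and one in $\hat V$ (edge count plus Corollary~\ref{cor:type3CN}), and that the CN--to--pair map is injective (no $4$-cycles), is the only nontrivial step, and you handle it cleanly.

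The paper takes a different route. Rather than partitioning the six VNs into $V'\cup\hat V$ and reducing to $K_{3,3}$, it partitions the nine CNs into three triples $C_1,C_2,C_3$ (implicitly the three row groups of the AB structure, since each VN meets exactly one CN per row group) and then argues that the edges to $C_1\cup C_2$ already form one $12$-cycle and the edges to $C_1\cup C_3$ form a second; the remaining verification is deferred to Lemma~\ref{lem:stop_set} and ``omitted in the interest of space.'' Your approach is more self-contained and yields the exact count, while the paper's CN partition is more explicitly tied to the row-group structure of AB codes and produces two concrete $12$-cycles directly, which is convenient when one later wants to break those cycles via lifting. Either proof suffices; yours is the cleaner combinatorial reduction.
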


%\vspace{-0.4cm}
\begin{proof}
Consider the Tanner graph of an AB code with $m$ CNs and $n$ VNs. Let $V_1:\{v_{i_1},v_{i_2},v_{i_3},v_{i_4},v_{i_5},v_{i_6}\} \subset V$ and $\hat{C}:\{c_{j_1},c_{j_2},\ldots,c_{j_9}\}\subset C$, where $j_k\in\{1,2,\ldots,m\}$, $k\in\{1,2,\ldots,9\}$ and $i_{\ell}\in\{1,2,\ldots,n\}$, $\ell \in\{1,2,\ldots,6\}$. We split $\hat{C}$ into three subsets, $C_1$, $C_2$ and $C_3$, respectively, where $C_1:\{c_{j_1},c_{j_2},c_{j_3}\}$, $C_2:\{c_{j_4},c_{j_5},c_{j_6}\}$, and $C_3:\{c_{j_7},c_{j_8},c_{j_9}\}$. We now establish a condition under which the VNs in $V_1$ connected to CNs in set $C_1\cup C_2$ are associated to a 12-cycle. The six edges connecting $V_1$ to $C_1$ and $C_2$, respectively, are denoted as $e_1$ and $e_2$, respectively. Next, we establish a condition under which the VNs in $V_1$ connected to CNs in set $C_1\cup C_3$ are associated to another 12-cycle, where the six edges connecting $V_1$ to $C_3$ are denoted as $e_3$. Finally, we show that under these conditions $e(S(6))=e(T_{6,6})=e_1\cup e_2\cup e_3$ by invoking Lemma \ref{lem:stop_set}. Further details are omitted in the interest of space.
\end{proof}

\subsection{Other termatiko sets associated to 12-cycles}

\begin{rem}
\label{rem4}
In the same way as above we can show that an $S(8)$ stopping set contains a 12-cycle
whose VNs form a $T_{6,8}$, and that an $S(12)$ stopping set contains a 12-cycle
whose VNs form a $T_{6,12}$, respectively. Details are omitted due to
space constraints. Since $|\mathcal{N}(T_{6,8})|\neq |\mathcal{N}(S(6)|$ we can
conclude that $T_{6,8}\neq S(6)$. Likewise, $T_{6,12}\neq S(6)$.
\end{rem}

\subsection{Eliminating small termatiko sets via algebraic lifting}
 In the algebraic lifting process described in Section \ref{sec:alg_lft}, a $\ell$-cycle can be broken by the lift if we ensure that the net permutation, which is the product of the oriented edge labels, assigned to its edges is not identical to the identity permutation. Let the assignments to the edges of a $\ell$-cycle be $\tau_L^{\kappa_1},\ldots,\tau_L^{\kappa_\ell}$, where $\tau_L^\kappa$ is a permutation matrix as discussed in Section \ref{sec:alg_lft}. Without loss of generality, then, the net permutation of the cycle is given by $\tau_L^{\sum_{i=1}^{\ell}(-1)^{i+1}k_i}$. This becomes the identity permutation only when 
 
\begin{equation}
\label{eq:permutation}
	\sum_{i=1}^{\ell}(-1)^{i+1}k_i=0,
\end{equation} 
where $0\leq k_i\leq m$. For example, for $\ell=12$, a 12-cycle will be
eliminated by the algebraic lifting process if 
(\ref{eq:permutation}) is non-zero. 
%Based on  Lemma
%  \ref{lem:stp_12cycle} and Remark~\ref{rem4}, if we eliminate  all 12-cycles via algebraic lifting, we can also eliminate %all $T_{6,\{6,8,12\}}$ termatiko sets. Since $T_{6,6}$ sets include two $T_{3,6}$ sets, by this method we can also remove %all $T_{3,6}$ termatiko sets.

%\begin{rem}
%	\textcolor{blue}{Note from \cite{ASCJ17} that the edge-spreading process is a method of connecting %replicated copies of the base protograph such that the degree distribution of the CNs and VNs, %respectively, is preserved. This implies that the structure of a termatiko set is also preserved after %lifting. As a result, the termatiko distance of a SC code cannot be less than the termatiko distance of the %corresponding BC. }
%\end{rem}

%\vspace{-1ex}
%\begin{proof}
%Recall from Lemma \ref{lem:stp_12cycle} that $e(S(6))$ consists of at least two 12-cycles with edges $e_1\cup e_2$ and $e_1\cup e_3$, respectively, and $e(S(6))=e_1\cup e_2 \cup e_3$. From Proposition \ref{prop:tset3} it is easy to observe that $e(T_{3,6})\subset e(S(6))=e_1\cup e_2 \cup e_3$. In other words, this implies that for any edge $e\in e(T_{3,6})$ the following holds: $|\{e\in e(T_{3,6}): e\in e(S(6))\}|\geq 1$. Moreover, based on Lemma \ref{lem:T_6_6} we can state that $e(S(6))=e(T_{6,6})$. As a result, by eliminating all the 12-cycles associated with $S(6)$ via algebraic lifting, we can eliminate the corresponding $T_{3,6}$ and $T_{6,6}$ sets. 
%\end{proof}

\vspace{-1ex}
\section{Optimization of AB SC Measurement Matrices}

In our previous work \cite{ASCJ17} we have shown that all harmful $(3,3)$
absorbing sets can be removed from an AB SC protograph by eliminating all
6-cycles due to the fact that each $(3,3)$ absorbing set contains a
6-cycle. In the same fashion we can see from  Lemma
  \ref{lem:stp_12cycle} and Remark~\ref{rem4}, that if we remove all
  12-cycles via a properly chosen algebraic lifting, we can eliminate all
  $T_{6,\{6,8,12\}}$ termatiko sets. Since $T_{6,6}$ sets include two
  $T_{3,6}$ sets, by this method we can also remove all $T_{3,6}$ termatiko
  sets.  In the following, we focus on two lifting schemes for constructing
  the SC protograph, namely cutting vector based \cite{AREKD16} and algebraic lifting schemes~\cite{ASCJ17}.

\vspace{-1ex}
\subsection{Enumeration of termatiko sets of size 6}
\label{sec:enum}

  % Let $G$ represent the Tanner graph of an AB measurement matrix $A$.
  Let $C(12)\subset V$, $|C(12)|=6$, represent the six  VNs of a 12-cycle in
  $G$. From Lemma \ref{lem:stp_12cycle} and Remark~\ref{rem4} it is evident
  that $T_{6,\{6,8,12\}}$ sets  are in fact $C(12)$ sets.
  %However, there may also be other size $6$ termatiko sets associated to 12-cycles in $G$ apart from these %three. Consequently, we propose a generalized algorithm for enumerating all $T_{6,M}$ sets for any $M$, %that are also the VNs of a 12-cycle, via cycle counting.}
 %
 Let $\mathscr{C}_{12}$ denote the set of all unique $C(12)$ sets, i.e., all
 12-cycles with a different set of VNs. In order to find the VN index $i$
 associated to an edge $(v_i,c_j)$ of a 12-cycle in $G$, we employ a cycle
 detection algorithm, such as the improved message passing algorithm
 proposed in \cite{LiLin15}. Such an algorithm has polynomial complexity and
 for AB codes, the complexity can be further reduced by factor $p$. We then obtain the set $\mathscr{C}_{12}$ by
 employing an efficient (binary) search algorithm to detect duplicate cycles
 associated with the same set of VNs.  This search algorithm has a
 complexity of $O(\log \mu)$, where $\mu$ is the number of all detected
 (non-unique) 12 cycles, which potentially can be very large. Algorithm
 \ref{alg:t_enum} proposes a simple enumeration algorithm for all
 $T_{6,M}$ sets,  $\mu_{T_{6,M}}$ with $M\in\{6,7,\dots,L p^2\}$, associated
 with a 12-cycle. Note that all $C(12)$ sets in $G$ are not necessarily
 associated to a termatiko set of size $6$. In order to determine whether or
 not a $C(12)$ set is a $T_{6,M}$ set, we adopt the following rule in
 Algorithm  \ref{alg:t_enum}: If and only if the IPA outputs a vector $\boldsymbol{\hat{x}}=\boldsymbol{0}$ corresponding to an input data vector $\boldsymbol{x}$ with support $C(12)$, represented as $\boldsymbol{x}_{C(12)},$ then $C(12)=T_{6,M}$. Note that $\mu_{T_{6,6}}+\mu_{T_{6,8}}+\mu_{T_{6,12}}\leq
  \mu_{T_{6,M}} \forall M\in\{6,7,\dots,L p^2\}$, and equality holds if
  $T_{6,\{6,8,12\}}$ are the only size $6$ termatiko sets
  associated to 12-cycles.

\setlength{\textfloatsep}{5pt}% Remove \textfloatsep
\begin{algorithm}[h]
\caption{Enumeration of all $T_{6,M}$ sets with
    $M\in\{6,7,\dots,L p^2\}$ in an AB measurement matrix $A$ ($\gamma=3$)}
\SetAlgoLined
\DontPrintSemicolon
\SetKwInOut{Input}{Input}
\SetKwInOut{Output}{Output}
%\underline{function} $\text{IPA}(\boldsymbol{y},A)$\;
\Input{$\mathscr{C}_{12},A$}
\Output{$\mu_{T_{6,M}}$}
\label{alg:t_enum}

Initialization: 
%\ForAll{$v\in V$, $c\in \mathcal{N}(v)$}{
	$\mu_{T_{6,M}}=0$\;
%}
\ForEach{$C(12)\in \mathscr{C}_{12}$}
{
%	\For{$\boldsymbol{x}_{C(12)} \text{ with } \text{supp}(\boldsymbol{x}_{C(12)})=C(12)$}{
	Fix a binary $\boldsymbol{x}_{C(12)} \text{ with } \text{supp}(\boldsymbol{x}_{C(12)})=C(12)$\;
	Compute $\boldsymbol{y}_{C(12)}=A\boldsymbol{x}_{C(12)}^T$ \;
	Run $\text{IPA}(\boldsymbol{y}_{C(12)},A)$\;
	\If{$\boldsymbol{\hat{x}=0}$}{	
			$\mu_{T_{6,M}}=\mu_{T_{6,M}}+1$ %[according to Proposition \ref{prop:tset3}]\;
	}
}
\end{algorithm}

%\textcolor{blue}{Note that $\mu_{T_{6,6}}+\mu_{T_{6,8}}+\mu_{T_{6,12}}\leq
%  \mu_{T_{6,M}} \forall M\in\{6,7,\dots,L p^2\}$, and equality holds if
%  $T_{6,6}$, $T_{6,8}$ and $T_{6,12}$ are the only size $6$ termatiko sets
%  associated to 12-cycles.}
%In \cite{Yauh16}, enumeration of termatiko sets is done by first detecting stopping sets in the measurement %matrix \cite{Ros09}. Enumerating stopping sets is a fixed-parameter tractable  problem, which means that %the algorithm can have exponential runtime. In contrast, Algorithm \ref{alg:t_enum} uses a cycle detection %approach which typically has polynomial complexity, and for quasi cyclic codes, such as AB codes, the %complexity can be further reduced by a factor of $p$ \cite{LiLin15}. 

\mbox{}

\vspace{-4ex}
\subsection{Optimization of the SC protograph}
\label{sec:optim}

Let us define the permutation indicator matrix $B_1\in \{0,1\}^{\gamma \times p}$, where a $1$ (resp., $0$) in position $(i,j)$ of this matrix indicates that all the non-zero elements of block $(i,j)$ of $H(\gamma,p)$ will be lifted by $\tau_L^\kappa$ (resp., $I$), for $\kappa\in\{1,2,\ldots,m\}$, resulting in the $H(\gamma,p,L)$ SC protograph matrix. The process of obtaining optimized SC protographs by using both cutting vector and algebraic lifting approaches is described as follows:

%\begin{enumerate}
(i) We first choose an $H(3,p)$ AB block matrix.

(ii) For the cutting vector approach based on the $H(3,p)$
  AB block matrix, we construct SC protograph matrices by choosing a cutting
  vector $\boldsymbol{\xi}^*$ from \cite[Table III]{R15} that provides a
  maximal minimum distance of 8 for the AB SC  protograph. For such a code the minimum distance is
  equivalent to the stopping distance \cite{R15}, and therefore it follows
  from Proposition \ref{prop:tset3}
 and Remark \ref{rem:T_6_6} that the AB SC protograph does not contain
  any $T_{3,6}$ and  $T_{6,6}$ termatiko sets.

(iii) In case of algebraic lifting, we minimize the number of 12-cycles in the Tanner graph of the SC protograph obtained from $H(3,p)$. We numerically  optimize the $B_1$ permutation matrix by using the  approach in \cite{ASCJ17}, and the cycle counting algorithm of \cite{LiLin15} is utilized to count the number of  12-cycles in each optimization step. This leads to an optimized SC protograph matrix $H(3,p,L)$ that contains a smaller number of $T_{6,M}$ sets compared to the non-optimized protograph.

(iv) Finally, for both SC protographs discussed previously, we apply a
degree $J$ lift to $H(3,p,\boldsymbol{\xi}^*,L)$ and $H(3,p,L)$, resp., and
obtain the corresponding optimized AB SC measurement matrix $A$, whose Tanner graph is used for reconstruction by the IPA. 

%\end{enumerate}

\vspace{-1ex}
\begin{prop}
\label{prop:2}
Let $\hat{G}$ be a Tanner graph obtained by applying a degree $J$ lift to
the Tanner graph $G$. Let $\mu_{C(12)}$ (resp. $\hat{\mu}_{C(12)}$)
represent the total number of 12-cycles in the graph $G$ (resp.,
$\hat{G}$). Also, let $\mu_{T_{3,6}}$, $\mu_{T_{6,M}}$
(resp. $\hat\mu_{T_{3,6}}$, $\hat{\mu}_{T_{6,M}}$) represent the total
number of $T_{3,6}$, $T_{6,M}$ sets in the graph $G$ (resp., $\hat{G}$). We
then have $\hat{\mu}_{C(12)}\leq J\mu_{C(12)}$ and $\hat{\mu}_{T_{3,6}}\leq
J\mu_{T_{3,6}}$, $\hat{\mu}_{T_{6,M}}\leq J\mu_{T_{6,M}}$.
\end{prop}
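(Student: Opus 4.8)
The plan is to view the degree-$J$ lift as a graph covering map $\phi:\hat{G}\to G$ and to reduce both families of inequalities to a single fact about how cycles lift. First I would record that for any cycle $C$ of length $\ell$ in $G$, the preimage $\phi^{-1}(C)$ is a vertex-disjoint union of cycles of $\hat{G}$, each mapped onto $C$ by $\phi$, with $\ell J$ edges in total; moreover, exactly as in Section~\ref{sec:alg_lft} and~(\ref{eq:permutation}), the lengths of these cycles are $\ell$ times the cycle lengths of the net permutation $\pi_C$ of $C$ (the ordered product of its oriented edge labels). In particular, the number of cycles of length exactly $\ell$ lying over $C$ equals the number of fixed points of $\pi_C$, which is at most $J$, with equality iff $\pi_C$ is the identity.

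Next I would show that every $12$-cycle $\hat{\mathcal{C}}$ of $\hat{G}$ projects onto a simple $12$-cycle of $G$. Since $\phi$ is a covering map and $\hat{\mathcal{C}}$ is a simple cycle, the image $\phi(\hat{\mathcal{C}})$ is a closed, reduced (backtracking-free) walk of length $12$ in $G$. Because an AB code has no $4$-cycle (Remark~\ref{rem:cyc}) and the SC protographs of interest are moreover $6$-cycle free (cf.~\cite{ASCJ17}), $G$ has no cycle of length $\le 6$; a short argument then shows that a reduced closed walk of length $12$ in such a graph traces out a single simple $12$-cycle (it cannot split off a shorter sub-cycle or traverse a shorter cycle several times). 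Together with the first step, each $12$-cycle of $G$ is covered by at most $J$ distinct $12$-cycles of $\hat{G}$, i.e.\ $\hat{\mu}_{C(12)}\le J\mu_{C(12)}$.

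Finally I would transfer the bound to termatiko sets. By Lemma~\ref{lem:stp_12cycle} and Remark~\ref{rem4}, the six VNs of any $T_{6,M}$ set form a $C(12)$ set, so it suffices to track which $C(12)$ sets are termatiko. The characterizations in Lemma~\ref{lem:stop_set}, Corollary~\ref{cor:type3CN}, Proposition~\ref{prop:tset3} and Remark~\ref{rem:T_6_6} only involve the local adjacency pattern of an $S(6)$ substructure, so they descend along $\phi$: an $S(6)$ substructure of $\hat{G}$, together with its neighbouring CNs, projects \emph{injectively} onto an $S(6)$ substructure of $G$ --- one-to-one on the six VNs and on the nine CNs, since two nodes sharing a fiber would create a repeated CN--VN pair or a $4$-cycle, contradicting Lemma~\ref{lem:stop_set} --- and the conditions (i)--(ii) characterizing termatiko sets are preserved, so the projection of a $T_{6,M}$ (resp.\ $T_{3,6}$) set is a $T_{6,M}$ (resp.\ $T_{3,6}$) set of $G$. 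Conversely, since the $S(6)$ substructure is connected, a coherent lift of it to $\hat{G}$ is determined by the fiber element chosen for any one of its VNs, so there are at most $J$ lifts per substructure of $G$. This yields $\hat{\mu}_{T_{6,M}}\le J\mu_{T_{6,M}}$, and $\hat{\mu}_{T_{3,6}}\le J\mu_{T_{3,6}}$ follows either in the same way from the three-VN characterization of a $T_{3,6}$ set, or from the two-to-one correspondence of Remark~\ref{rem:T_6_6} between a $T_{6,6}$ set and the two $T_{3,6}$ sets it contains.

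The step I expect to be the main obstacle is the injective-projection / no-degeneration claim underlying the last two paragraphs: verifying that a $12$-cycle, or an $S(6)$ (hence $T_{6,M}$ or $T_{3,6}$) substructure of $\hat{G}$, cannot collapse under $\phi$ to a shorter cycle (for instance a $6$-cycle traversed twice) or to a set with fewer than six VNs or nine CNs. This is precisely where the AB-specific girth facts are indispensable, together with the short case analysis that no two VNs (resp.\ CNs) of the substructure lie in a common fiber; once this is settled, the remaining $\le J$ count is the routine covering-degree argument already implicit in~(\ref{eq:permutation}).
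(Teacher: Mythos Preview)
Your proposal is correct and is precisely a detailed unpacking of the paper's own proof, which consists of the single sentence ``The proof is a simple consequence of the properties of graph lifting.'' Your covering-map argument---substructures of $\hat{G}$ project under $\phi$ to substructures of $G$, and each connected substructure of $G$ admits at most $J$ isomorphic lifts---is exactly the standard fact the authors are invoking; indeed you are more careful than the paper, which does not explicitly address the degeneration issue (a $12$-cycle of $\hat{G}$ covering a shorter cycle of $G$) that you correctly flag and resolve via the contextual girth assumption.
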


\vspace{-1ex}
\noindent The proof is a simple consequence of the properties of graph lifting.

\vspace{-2ex}
\section{Simulation Results}
\label{sec:results}
We now provide results for the IPA reconstruction performance for different constructions of measurement matrices via Monte Carlo simulations. 
 % IPA reconstruction performance AB SC LDPC measurement matrices obtained via algebraic lifting methods, and compare them to the performances of other AB and non-AB SC measurement matrices obtained via non-algebraic constructions. In total, we choose five types of measurement matrices, $A_1$ to $A_5$, respectively, where matrices $A_1$, $A_3$ and $A_4$ are AB matrices with row weight $p=7$ and memory $m=1$. 

\begin{itemize}
\item $A_1$ is obtained as a block diagonal matrix where each block is
  obtained from a $H(3,7)$
  AB base matrix of size $3p\times p^2$ and then individually uplifted by factor $J$.
%On the other hand, the $H_1$ matrix in $A_1$ is an all zero matrix.
\item $A_2$ represents a non AB SC LDPC matrix obtained by coupling $L$
  copies of a $(3,7)$ \emph{random} regular LDPC matrix of size $3p\times p^2$, uplifted by a factor $J$.% obtained from a $(3,7)$ base code.
\item $A_3$ represents a $H(3,7,\boldsymbol{\xi}^*,L,J)$ matrix obtained by
  applying a degree $J$ lift to the protograph of the
  $H(3,7,\boldsymbol{\xi}^*,L)$ SC protograph matrix from  a cutting vector approach. 
\item $A_4$ represents a $H(3,7,L,J)$ matrix obtained by applying a degree $J$ lift to the protograph of the  optimized $H(3,7,L)$ SC protograph matrix based on algebraic lifting. 
\item $A_5$ represents a Gaussian matrix with same dimension as $A_4$ whose elements are $\mathcal{N}(0,\sigma^2)$ Gaussian random variables. Without loss of generality, $\sigma^2=1$.
\end{itemize}

The matrices $A_1$ to $A_4$ have the same constraint length of $Jp^2$, and all matrices have dimension $3(L+1)Jp\times LJp^2$. As parameters we select $\gamma=3$, $p=7$, $m=1$, $J=5$, $L=10$, which leads to a blocklength of $n=2450$ for all matrices.
%
%\noindent Note that matrices $A_1$ to $A_4$ have the same constraint length of $Jp^2$, and all five matrices have dimension $3(L+1)Jp\times LJp^2$. 
For these parameters Table~\ref{tab:tab1}  shows the total number of
12-cycles and $T_{6,M}$ sets, $M\in\{6,7,\dots Lp^2\}$, for the corresponding
protograph matrices of $A_1$, $A_3$ and $A_4$ \footnote{The enumeration
  result for $A_2$ has been excluded as termatiko set enumeration for non-AB
  matrices is beyond the scope of this paper.}. We observe
  that spatial coupling is able to provide a
  significant reduction of both $T_{3,6}$ and $T_{6,\{6,8,12\}}$ termatiko
  sets. Also,  Table~\ref{tab:tab1} verifies that for the cutting vector
  approach with $\boldsymbol{\xi}^*$ all  $T_{3,6}$ sets are eliminated. We
  also see that by optimizing the AB SC measurement matrix via algebraic
  lifting, $T_{6,M}$ sets can be completely removed from the protograph,
  which also implies the elimination of $T_{3,6}$ and $T_{6,\{6,8,12\}}$ sets. By invoking Proposition~\ref{prop:2}, these results also
  hold for the terminally lifted Tanner graph of $A_4$. 
%In comparison, the SC code obtained via optimized general lifting scheme has minimum distance $12$ %(equivalently stopping distance $12$) for $p=7$ and $m=1$ (see \cite{MR17} Table II), which leaves %the possibility that it contains $T_{6,12}$ sets (and hence its termatiko distance $\leq 6$).  }
%are absent from $A_4$, this does not rule out that there exists less harmful termatiko sets of larger order in the Tanner %graph of $A_4$. Tables I and II show that spatial coupling (matrices $A_3$ and $A_4$) provides a significant reduction of %$T_{3,6}$ termatiko sets. Further, for the SC algebraic lifting case we are able to completely eliminate all $T_{3,6}$ %termatiko sets.

\vspace{0cm}
\begin{table}[h]
\vspace{-1ex}
\centering
%\resizebox{\textwidth}{!}{
\resizebox{\columnwidth}{!}{
\begin{tabular}{|c|c|c|c|c|}
\hline
\textbf{Number of}&\textbf{ protograph of} \boldsymbol{$A_1$}&\textbf{protograph of} \boldsymbol{$A_3$}&\textbf{protograph of} \boldsymbol{$A_4$} \\
%\cline{1-7}
\hline\hline
12-cycles&$2409050$&$661311$&$227150$ \\
\hline
$T_{3,6}$ sets&$4900$&$0$&$0$ \\
\hline
$T_{6,M}$ sets&$9800$&$63$&$0$ \\
\hline
\end{tabular}
}
%\vspace{-0.1cm}
\caption{\label{tab:tab1} Total number of 12-cycles and $T_{6,M}$ sets,
  $M\in\{6,7,\dots,Lp^2\}$,  in
  the corresponding protograph matrices for $A_1$, $A_3$, $A_4$ with the
  parameters 
  $m=1$, $p=7$, $L=10$. }
\end{table}

% \begin{table}[htb]
% \centering
% %\resizebox{\textwidth}{!}{
% \begin{tabular}{|c|c|c|c|c|}
% \hline
% \textbf{Number of}&\boldsymbol{$A_1$}&\boldsymbol{$A_3$}&\boldsymbol{$A_4$} \\
% %\cline{1-7}
% \hline\hline
% 12-cycles&$-$&$3383262$&$1273647$ \\
% \hline
% $T_{3,6}$ sets&$-$&$-$&$0$ \\
% \hline
% \end{tabular}
% %\vspace{-0.1cm}
% \caption{\label{tab:tab2} Total number of 12-cycles and $T_{3,6}$ sets in  $A_1$, $A_3$, and $A_4$ for $J=5$. }
% \end{table}

\vspace{-2ex}

Fig.~\ref{fig:res} displays the IPA reconstruction performance of matrices $A_1$ to $A_4$, and the LP reconstruction performances of $A_4$ and $A_5$. For the IPA, the probability of reconstruction is defined as $\Pr(\boldsymbol{\hat{x}}=\boldsymbol{x})$. For the LP, the  probability of reconstruction is given as $\Pr(\max_{i\in\{1,2,\ldots,n\}}|\hat{x}_i-x_i|\leq 10^{-3})$. All data points on the performance curve are averaged over $1000$ realizations of the binary vector $\boldsymbol{x}$.

 % where the reconstruction probability i.e., the fraction of correctly recovered measurement vectors, is displayed versus the sparsity of the data vector. In case of LP decoding, a random $k$-sparse signal is said to be correctly recovered if $|\hat{x}_i-x_i|\leq 10^{-3}$ for all $i\in\{1,2,\ldots,n\}$. For , a random $k$-sparse binary signal is said to be correctly recovered if $\boldsymbol{\hat{x}}=\boldsymbol{x}$. All data points on the performance curve are averaged over $1000$ realizations of $\boldsymbol{x}$, which is binary. 

From Fig.~\ref{fig:res} we can observe a behavior similar to the results
shown in Table \ref{tab:tab1}, i.e., that spatially coupling leads to a
significant increase in IPA reconstruction performance: for the same
probability of reconstruction the density of the signal can be much higher. 
%\textcolor{blue}{Here, algebraic lifting performs best due to its ability
%to break termatiko sets %of size $6$ or less, with the potential to
%increase the termatiko distance of the SC measurement %matrix beyond $6$.} 
We also observe that LP based reconstruction for $A_4$ outperforms IPA decoding, albeit at a significantly higher reconstruction complexity. Whereas the IPA has a complexity of only $O(n(\log (n/k))^2 \log(k))$ \cite{CSW10}, LP-based reconstruction has a complexity which is polynomial in time. Therefore, IPA based reconstruction with algebraically lifted  SC measurement matrices serves as a good compromise  between complexity and performance, in particular for larger block lengths.

\begin{figure}%[H]
\centering
\vspace{-2ex}
\includegraphics[scale=0.33]{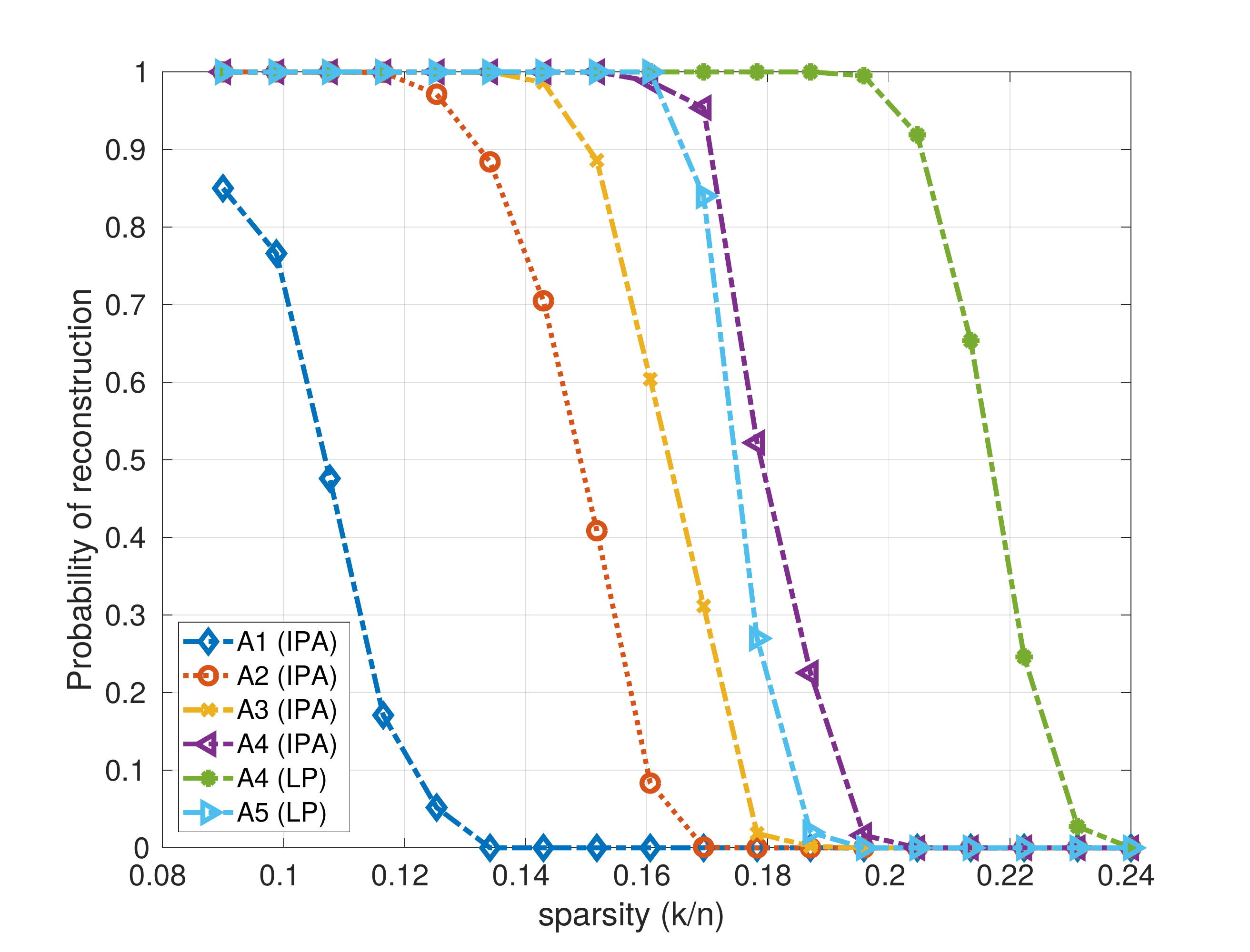}
%\centerline{\resizebox{3.7in}{3in}{\includegraphics{res.pdf}}}
\vspace{-3ex}
\caption{Reconstruction probability versus sparsity of the data vector
  $\boldsymbol{x}$ for  sparse measurement matrices $A_1$ to $A_4$ with $J=5$,
  $L=10$, $p=7$, $m=1$, and for a Gaussian matrix $A_5$. All matrices have
  dimension $m\times n$ with $m=1155$ and $n=2450$.}
\vspace{-1ex}
\label{fig:res}
\end{figure}

\vspace{-1ex}

\bibliographystyle{IEEEtran}
\bibliography{Bib}

% Generated by IEEEtran.bst, version: 1.14 (2015/08/26)
\begin{thebibliography}{10}
\providecommand{\url}[1]{#1}
\csname url@samestyle\endcsname
\providecommand{\newblock}{\relax}
\providecommand{\bibinfo}[2]{#2}
\providecommand{\BIBentrySTDinterwordspacing}{\spaceskip=0pt\relax}
\providecommand{\BIBentryALTinterwordstretchfactor}{4}
\providecommand{\BIBentryALTinterwordspacing}{\spaceskip=\fontdimen2\font plus
\BIBentryALTinterwordstretchfactor\fontdimen3\font minus
  \fontdimen4\font\relax}
\providecommand{\BIBforeignlanguage}[2]{{%
\expandafter\ifx\csname l@#1\endcsname\relax
\typeout{** WARNING: IEEEtran.bst: No hyphenation pattern has been}%
\typeout{** loaded for the language `#1'. Using the pattern for}%
\typeout{** the default language instead.}%
\else
\language=\csname l@#1\endcsname
\fi
#2}}
\providecommand{\BIBdecl}{\relax}
\BIBdecl

\bibitem{CT06}
E.~J. Candes and T.~Tao, ``Near-optimal signal recovery from random
  projections: {U}niversal encoding strategies?'' \emph{IEEE Trans. on Inf.
  Theory}, vol.~52, no.~12, pp. 5406--5425, Dec. 2006.

\bibitem{Don06}
D.~L. Donoho, ``Compressed sensing,'' \emph{IEEE Trans. on Inf. Theory},
  vol.~52, no.~4, pp. 1289--1306, Apr 2006.

\bibitem{CDS98}
S.~S. Chen, D.~L. Donoho, and M.~A. Saunders, ``Atomic decomposition by basis
  pursuit,'' \emph{SIAM J. Comput}, vol.~20, no.~1, pp. 33--61, Aug. 1998.

\bibitem{WY13}
X.~Wu and Z.~Yang, ``Verification-based interval-passing algorithm for
  compressed sensing,'' \emph{IEEE Signal Process. Lett.}, vol.~20, no.~10, pp.
  933--936, Oct. 2013.

\bibitem{MalDon10}
A.~Maleki and D.~L. Donoho, ``Optimally tuned iterative reconstruction
  algorithms for compressed sensing,'' \emph{IEEE J. Sel. Topics Signal
  Process.}, vol.~4, no.~2, pp. 330--341, April 2010.

\bibitem{DonMalMon09}
D.~L. Donoho, A.~Maleki, and A.~Montanari, ``Message passing algorithms for
  compressed sensing: I. motivation and construction,'' in \emph{Proc.~IEEE
  Info. Theory Workshop (ITW), Cairo}, July 2010.

\bibitem{CSW10}
V.~Chandar, D.~Shah, and G.~W. Wornell, ``A simple message-passing algorithm
  for compressed sensing,'' in \emph{Proc. IEEE Int. Symp. Inf. Theory (ISIT)},
  Jun. 2010, pp. 1968--1972.

\bibitem{Ravn12}
V.~Ravanmehr, L.~Danjean, B.~Vasic, and D.~Declercq, ``Interval-passing
  algorithm for non-negative measurement matrices: Performance and
  reconstruction analysis,'' \emph{IEEE J. Sel. Topics Circuits and Systems},
  vol.~2, no.~3, pp. 424--432, Sept. 2012.

\bibitem{Yauh16}
Y.~Yakimenka and E.~Rosnes, ``On failing sets of the interval-passing algorithm
  for compressed sensing,'' in \emph{Proc. 54th Allerton Conf. on
  Communication, Control and Computing}, Sept. 2016, pp. 306--311.

\bibitem{MR17}
D.~Mitchell and E.~Rosnes, ``Edge spreading design of high rate array-based
  {SC-{LDPC}} codes,'' in \emph{Proc. IEEE Int'l Symp. on Info. Theory (ISIT)},
  July 2017.

\bibitem{FZ99}
A.~Jimenez~Felstrom and K.~Zigangirov, ``Time-varying periodic convolutional
  codes with low-density parity-check matrix,'' \emph{IEEE Transactions on
  Information Theory}, vol.~45, no.~6, pp. 2181--2191, Sep 1999.

\bibitem{R15}
E.~Rosnes, ``On the minimum distance of array-based spatially-coupled
  low-density parity-check codes,'' in \emph{Proc. IEEE Int. Symp. Inf. Theory
  (ISIT)}, June 2015.

\bibitem{ASCJ17}
A.~Beemer, S.~Habib, C.~Kelley, and J.~Kliewer, ``A generalized algebraic
  approach to optimizing {SC-{LDPC}} codes,'' in \emph{Proc. 55th Allerton
  Conf. on Communication, Control, and Computing}, Oct. 2017, pp. 1--6.

\bibitem{LX13}
X.~Liu and S.~Xia, ``Constructions of quasi-cyclic measurement matrices based
  on array codes,'' in \emph{Proc. IEEE Int. Symp. Inf. Theory (ISIT)}, Jun.
  2013, pp. 479--483.

\bibitem{Pfister10}
S.~Kudekar and H.~D. Pfister, ``The effect of spatial coupling on compressive
  sensing,'' in \emph{Proc. 48th Allerton Conf. on Communication, Control and
  Computing}, October 2010, pp. 347--353.

\bibitem{Proietti02}
C.~Di, D.~Proietti, I.~E. Telatar, T.~J. Richardson, and R.~L. Urbanke,
  ``Finite-length analysis of low-density parity-check codes on the binary
  erasure channel,'' \emph{IEEE Trans. Inf. Theory}, vol.~48, no.~6, pp.
  1570--1579, Jun. 2002.

\bibitem{DAnantharam10}
L.~Dolecek, Z.~Zhang, V.~Anantharam, M.~Wainright, and B.~Nikolic, ``Analysis
  of absorbing sets and fully absorbing sets of array-based {LDPC} codes,''
  \emph{IEEE Trans. on Inf. Theory}, pp. 181--201, Jan. 2010.

\bibitem{YR18}
Y.~Yakimenka and E.~Rosnes, ``Failure analysis of the interval-passing
  algorithm for compressed sensing,'' in \emph{arXiv: 1806.05110v2. [Online].
  Available: https://arxiv.org/abs/1806.05110}, June 2018.

\bibitem{AREKD16}
B.~Amiri, A.~Reisizadehmobarakeh, H.~Esfahanizadeh, J.~Kliewer, and L.~Dolecek,
  ``Optimized design of finite-length separable circulant-based
  spatially-coupled codes: An absorbing set-based analysis,'' \emph{IEEE Trans.
  on Commun.}, vol.~64, no.~10, pp. 4029--4043, Oct 2016.

\bibitem{LiLin15}
J.~Li, S.~Lin, and K.~Abdel-Ghaffar, ``Improved message-passing algorithm for
  counting short cycles in bipartite graphs,'' in \emph{Proc. IEEE Int. Symp.
  Inf. Theory (ISIT)}, Jun. 2015, pp. 416--420.

\end{thebibliography}


% Generated by IEEEtran.bst, version: 1.13 (2008/09/30)
\begin{thebibliography}{10}
\providecommand{\url}[1]{#1}
\csname url@samestyle\endcsname
\providecommand{\newblock}{\relax}
\providecommand{\bibinfo}[2]{#2}
\providecommand{\BIBentrySTDinterwordspacing}{\spaceskip=0pt\relax}
\providecommand{\BIBentryALTinterwordstretchfactor}{4}
\providecommand{\BIBentryALTinterwordspacing}{\spaceskip=\fontdimen2\font plus
\BIBentryALTinterwordstretchfactor\fontdimen3\font minus
  \fontdimen4\font\relax}
\providecommand{\BIBforeignlanguage}[2]{{%
\expandafter\ifx\csname l@#1\endcsname\relax
\typeout{** WARNING: IEEEtran.bst: No hyphenation pattern has been}%
\typeout{** loaded for the language `#1'. Using the pattern for}%
\typeout{** the default language instead.}%
\else
\language=\csname l@#1\endcsname
\fi
#2}}
\providecommand{\BIBdecl}{\relax}
\BIBdecl



\bibitem{AREKD16}
B.~Amiri, A.~Reisizadehmobarakeh, H.~Esfahanizadeh, J.~Kliewer, and L.~Dolecek,
  ``Optimized design of finite-length separable circulant-based
  spatially-coupled codes: An absorbing set-based analysis,'' \emph{IEEE
  Transactions on Communications}, vol.~64, no.~10, pp. 4029--4043, Oct 2016.

\bibitem{allerton}
A.~Beemer, S.~Habib, C.~Kelley, J.~Kliewer, ``A Generalized Algebraic Approach to Optimizing SC-LDPC Codes,'' submitted to Allerton Conference, 2017

@article{,
author={H. Esfahanizadeh and A. Hareedy and L. Dolecek}, 
title="A Novel Combinatorial Framework to Construct Spatially-Coupled Codes: Minimum Overlap Partitioning",
journal="",
ISSN="",
publisher="",
year="2017",
month="",
volume="",
number="",
pages="",
URL="",
DOI="",
}
\bibitem{HHD17}
H. Esfahanizadeh, A. Hareedy and L. Dolecek, ``A Generalized Algebraic Approach to Optimizing SC-LDPC Codes,'' IEEE International Symposium on Information Theory (ISIT), 2017
\bibitem{Linecount}
A file named linecount_m_1.pdf in Drop box

\end{thebibliography}

\end{document}